\def\BibTeX{{\rm B\kern-.05em{\sc i\kern-.025em b}\kern-.08em
    T\kern-.1667em\lower.7ex\hbox{E}\kern-.125emX}}
\newtheorem{theorem}{Theorem}
\begin{document}

\title{
UNIQ: Communication-Efficient Distributed Quantum Computing via Unified Nonlinear Integer Programming}

\author{Hui Zhong, Jiachen Shen, Lei Fan, Xinyue Zhang, Hao Wang, Miao Pan and Zhu Han}

\maketitle

\begin{abstract}
Distributed quantum computing (DQC) is widely regarded as a promising approach to overcome quantum hardware limitations. A major challenge in DQC lies in reducing the communication cost introduced by remote CNOT gates, which are significantly slower and more resource-consuming than local operations. Existing DQC approaches treat the three essential components—qubit allocation, entanglement management, and network scheduling—as independent stages, optimizing each in isolation. However, we observe that these components are inherently interdependent, and therefore adopting a unified optimization strategy can be more efficient to achieve the global optimal solutions. Consequently, we propose UNIQ, a novel DQC optimization framework that integrates all three components into a non-linear integer programming (NIP) model. UNIQ aims to reduce the circuit runtime by maximizing parallel Einstein–Podolsky–Rosen (EPR) pair generation through the use of idle communication qubits, while simultaneously minimizing the communication cost of remote gates. To solve this NP-hard formulated problem, we adopt two key strategies: a greedy algorithm for efficiently mapping logical qubits to different QPUs, and a JIT (Just-In-Time) approach that builds EPR pairs in parallel within each time slot. Extensive simulation results demonstrate that our approach is widely applicable to diverse quantum circuits and QPU topologies, while substantially reducing communication cost and runtime over existing methods.
\end{abstract}

\begin{IEEEkeywords}
Distributed quantum computing, qubit allocation, network scheduling, EPR pair generation.
\end{IEEEkeywords}

\section{Introduction\label{sec:introduction}}
Due to its unique quantum properties, quantum computing offers the potential to solve problems at an exponential speed compared to classical computing~\cite{bub2010quantum,jozsa2003role,rieffel2000introduction}. As a result, quantum computing is well-suited for solving complex problems~\cite{rietsche2022quantum,bova2021commercial,gyongyosi2019survey}, especially in domains such as molecular simulation~\cite{engkvist2000accurate}, drug discovery~\cite{cao2018potential} and risk analysis~\cite{wilkens2023quantum}. To fully realize the advantages of quantum computing, practical quantum algorithms often require millions of qubits. However, current quantum hardware is still in the Noisy Intermediate-Scale Quantum (NISQ) era, where only a few hundred qubits are available~\cite{ju2024harnessing,zhao2024bridging,brooks2019beyond,zhong2024tuning,preskill2018quantum}. Moreover, due to fabrication challenges~\cite{brink2018device}, crosstalk errors~\cite{bruzewicz2019trapped} and quantum decoherence~\cite{brandt1999qubit}, scaling up the number of reliable qubits remains difficult in the short term, which restricts the advancement of quantum computing. To address this limitation, a widely accepted approach in both academia and industry is distributed quantum computing (DQC)~\cite{caleffi2024distributed,barral2025review,cacciapuoti2019quantum,beals2013efficient}. Similar to classical distributed computing, DQC enhances the computational power of quantum systems by interconnecting multiple small-scale quantum chips (QPUs) via a quantum network, thereby effectively increasing the total number of available qubits. For example, connecting two 128-qubit QPUs in a DQC system can functionally emulate a 256-qubit quantum computer.

The general workflow of DQC is typically divided into three main components: {\em qubit allocation, entanglement management, and network scheduling}~\cite{wu2022autocomm,wu2023qucomm,ferrari2023modular, liu2025co}. Qubit allocation divides large-scale quantum algorithms into multiple subcircuits and then assigns them to different QPUs for collaborative execution of the overall task. Gates within the same QPU are referred to as local gates, while those spanning across QPUs are referred to as remote gates. Since remote gates are much slower and more resource-consuming than local gates~\cite{davis2023towards}, a key optimization objective is to minimize the number of remote gates. Entanglement management is responsible for establishing EPR pairs and communication channels between QPUs to facilitate the execution of remote gates. A commonly adopted approach is the Cat-Comm protocol, which relies on cat entanglement and disentanglement procedures~\cite{luo2024automatic}. This protocol uses specially designed circuits to entangle qubits across QPUs, while preserving the quantum information of the involved qubits. Network scheduling determines the execution order of remote gates by analyzing their dependency relationships. For example, if two remote CNOT gates share a common qubit, they must be executed in sequence. This is typically modeled as a directed acyclic graph (DAG)~\cite{chandra2024network}, where each node corresponds to a remote gate, and edges denote execution dependencies.

Several studies have explored the optimization of the DQC framework. Mao et al. (INFOCOM 2023)~\cite{mao2023qubit} focus on the qubit allocation problem for DQC and prove its NP-hardness. They propose an MHSA-based approach that combines local search and simulated annealing techniques, achieving better performance compared to existing methods. Similarly, Kan et al. (QCE 2024)~\cite{kan2024scalable} also concentrate on qubit allocation by merging sub-circuits. They dynamically adjust the partitioning strategy based on the resource constraints of qubits, thereby reducing the number of circuit cuts. In contrast, Chandra et al. (TPS-ISA 2024)~\cite{chandra2024network} specialize in network scheduling for DQC, comparing a Resource-Constrained Project Scheduling Problem (RCPSP) framework with a greedy heuristic. Their results demonstrate that the two methods are suitable for different application scenarios, depending on circuit complexity. Further extending the scope, Zhou et al.~\cite{zhou2025cloudqc} shift the focus from a single-tenant to a multi-tenant DQC setting. They propose the CloudQC framework, which progressively optimizes qubit allocation and network scheduling to minimize the circuit runtime.

However, we found that most existing methods suffer from the following {\em limitations}. First, prior work typically divides DQC into the three components mentioned above and optimizes each of them separately. However, these components are inherently interdependent. Optimizing a single stage in isolation lacks a global view and often fails to achieve minimal circuit runtime. Second, since EPR generation time far exceeds gate execution time, the serial approach of establishing EPRs one by one before each remote gate leads to excessive latency. So, we need to find more efficient strategies to reduce the total EPR generation time, such as parallel EPR establishment or EPR reuse. Third, existing studies lack a comprehensive evaluation methodology. Some works compare their overall DQC frameworks with prior systems~\cite{wu2022autocomm,kan2024scalable}, while others only evaluate specific algorithms against known baselines (e.g., simulated annealing)~\cite{zhou2025cloudqc,mao2023qubit}. So the overall advantages of their DQC frameworks cannot be fully validated.

Therefore, this paper novelly proposes UNIQ, a unified optimization framework for DQC which further reduces the communication cost of remote gates and minimizes the circuit runtime. Specifically, we first integrate qubit allocation, entanglement management, and network scheduling into a unified Nonlinear Integer Programming (NIP) model, thereby obtaining a more efficient feasible solution. Second, we cut the runtime of the entire circuit into uniform time slots. Within each slot, we minimize the EPR generation time by utilizing idle communication qubits to enable parallel EPR establishment. Third, our approach conducts a comprehensive evaluation that includes both algorithm-level comparisons with established methods and system-level comparisons with frameworks under similar problem settings. We also evaluate performance across various quantum circuits and DQC topologies. Our contributions can be summarized as follows:
\begin{itemize}
    \item We propose a novel DQC optimization framework called UNIQ that integrates the three general DQC steps into an NIP model. This approach reduces remote gate communication costs and minimizes total circuit runtime simultaneously.

    \item Our framework enables partial pre-establishment of EPR pairs by proactively connecting available communication qubits in advance. These pre-established links are then utilized by upcoming remote CNOT gates, effectively reducing the total EPR generation time.

    \item We conducted extensive simulations on different quantum circuits across various QPU topologies. Our approach outperforms existing algorithms and DQC frameworks, achieving both acceptable algorithm execution time and significantly reduced circuit runtime.

\end{itemize}

The rest of this paper is organized as follows. Section~\ref{sec:preliminaries} introduces the background of quantum computing, DQC, and quantum entanglement. Section~\ref{sec:DQC_model} describes our DQC model. Sections~\ref{sec:problem} and~\ref{sec:framework} detail the problem formulation and the corresponding optimization algorithms, respectively. Section~\ref{sec:evaluation} presents extensive simulations to evaluate the effectiveness of our framework. Finally, we discuss future research directions and summarize our work in Section~\ref{sec:conclusion}.

\section{Preliminaries\label{sec:preliminaries}}
In this section, we first introduce the backgrounds of quantum computing~\cite{ju2024privacy,hey1999quantum,vedral1998basics,fan2022hybrid,hirche2023quantum}, and then outline the general workflow of DQC for solving large-scale quantum circuits.

\subsection{Qubits and Quantum Gates}
The basic unit of quantum computing is the qubit, which resides in a two-dimensional Hilbert space spanned by $\left | 0  \right \rangle$ and $\left | 1  \right \rangle$, similar to the 0 and 1 in classical computing. $\left | 0  \right \rangle$  and $\left | 1  \right \rangle$ can be represented as vectors $\left | 0  \right \rangle = (1,0)^{T}\ $ and $\left | 1  \right \rangle = (0,1)^{T}$. A quantum state describes the condition of a quantum system and can be classified as either a pure state or a mixed state. A pure state is a deterministic state. In a single-qubit system, a pure state can be either $\left | 0  \right \rangle$  or $\left | 1  \right \rangle$, or a superposition state $\left | \psi  \right \rangle = \alpha_{1}  \left | 0  \right \rangle + \alpha_{2} \left | 1 \right \rangle = (\alpha_{1},\alpha_{2} )^{T}\in \mathbb{C} ^{2} $, where $\alpha_{1}$ and $\alpha_{2}$ are complex numbers satisfying $\left | \alpha_{1}  \right | ^{2} + \left | \alpha_{2} \right | ^{2} = 1$. This means that a qubit can represent multiple values simultaneously. In an $n$ qubit system, a pure state can be represented as $\left | \psi  \right \rangle = \sum_{i=1}^{2^{n}}\alpha _{i}\left | i  \right \rangle  \in \mathbb{C} ^{2^{n}}$, where the coefficients satisfy $\left | \alpha_{1}  \right | ^{2} +...+ \left | \alpha_{2^{n}} \right | ^{2} = 1$. When the system interacts with the environment or experiences noise, it may evolve into a mixed state. A mixed state is described by a density matrix of the form $\rho =\sum_{i=1}^{2^{n}} p_{i} \left | \psi  \right \rangle \left\langle\psi\right| $, where $n$ is the number of qubits. Each $\left|\psi_i\right\rangle$ is a pure state and $p_{i}$ denotes the probability of the system being in pure state $\left|\psi_i\right\rangle$ with $\sum_{i=1}^{2^{n}} p_{i} =1$.

Quantum gates are fundamental operations on qubits, analogous to logical gates in classical computing. Mathematically, each quantum gate is represented by a unitary matrix $U$, which satisfies $U^{\dagger} U = U U^{\dagger} = I$. $U^{\dagger}$ denotes the conjugate transpose of $U$, and $I$ is the identity matrix. These gates govern the evolution of quantum states, and different sequences of quantum gates implement different quantum algorithms. Specifically, when the quantum state passes through a quantum gate $U$, the original states will become a new quantum state $\rho'=U\rho U^{\dagger}$. Moreover, any multi-qubit quantum gate can be decomposed into a series of basic quantum gates (such as single-qubit gates and CNOT gates). Common single-qubit gates include the X gate ($X\left | 0  \right \rangle =\left | 1  \right \rangle $, $X\left | 1  \right \rangle =\left | 0  \right \rangle $), analogous to the classical NOT gate. Common two-qubit gates include the CNOT gate, which flips the target qubit if the control qubit is $\left|1\right\rangle$.


\subsection{Distributed Quantum Computing}

\begin{figure}[t]
\centering
\includegraphics[width=0.4\textwidth]{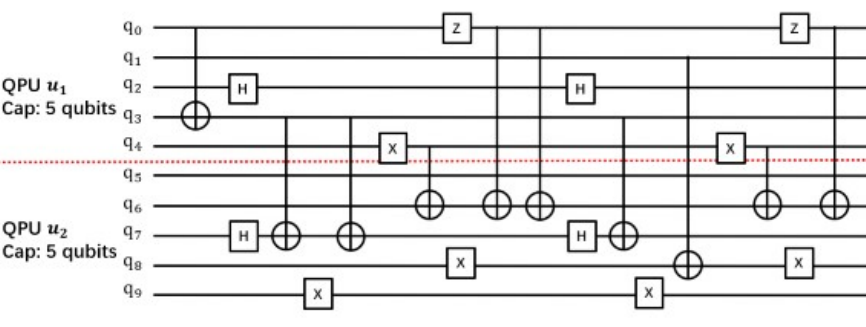}
\caption{Solving a large-scale problem via DQC. A quantum circuit is partitioned into two subcircuits, each assigned to a different QPU for parallel execution.}\label{fig:circuit_cut}
\vspace{-4mm}
\end{figure}

\begin{figure}[t]
\centering
\includegraphics[width=0.4\textwidth]{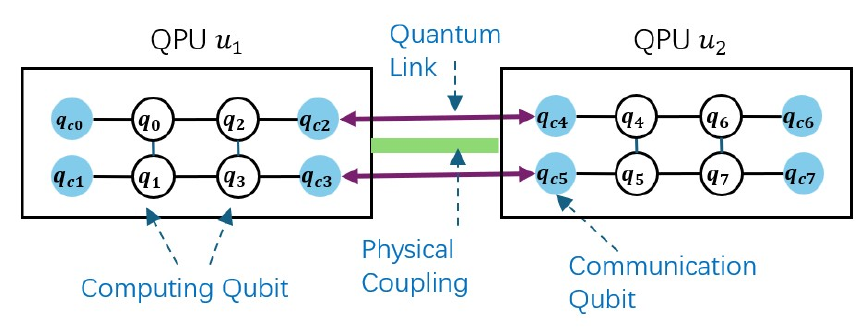}
\caption{DQC architecture with two QPUs.}\label{fig:dqc_archi}
\vspace{-4mm}
\end{figure}

\begin{figure}[t]
\centering
\includegraphics[width=0.3\textwidth]{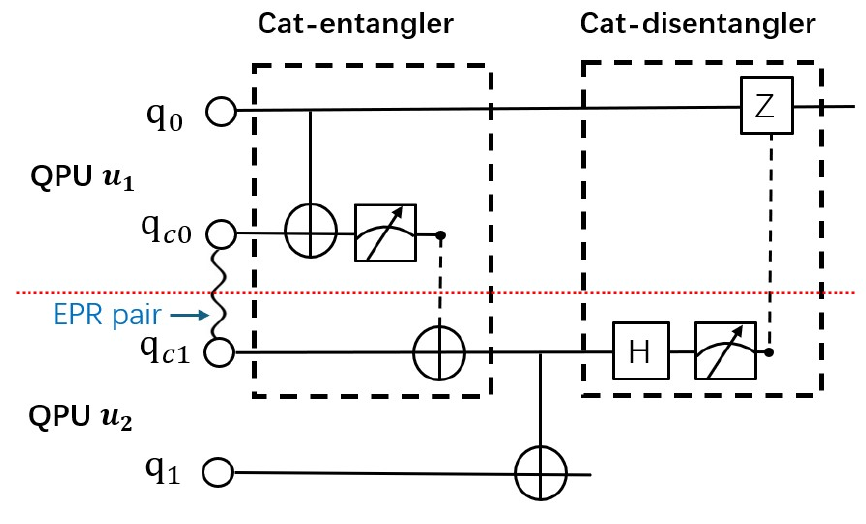}
\caption{Cat-Comm implementation of one remote CNOT gate. $q_0$ and $q_1$ are computing qubits; $q_{c0}$ and $q_{c1}$ are communication qubits. $q_0$ and $q_{c0}$ belong to QPU $u_1$; $q_1$ and $q_{c1}$ belong to QPU $u_2$.}\label{fig:cat_comm}
\vspace{-4mm}
\end{figure}

\begin{table}[t]
\centering
\caption{Comparison of EPR pair generation latency.}
\label{tab:epr_latency}
\begin{tabular}{|l|l|l|}
\hline
\textbf{Operation}         & \textbf{Variable Name} & \textbf{Latency}      \\ \hline
Single-qubit gate          & $t_{1q}$               & $\sim$0.1 CX          \\ \hline
CX and CZ gate             & $t_{2q}$               & 1 CX                  \\ \hline
Measurement                & $t_{ms}$               & 5 CX                  \\ \hline
EPR preparation            & $t_{ep}$               & $\sim$12 CX           \\ \hline
\end{tabular}
\vspace{-4mm}
\end{table}

DQC integrates multiple small quantum chips (QPU) through network infrastructure to increase the number of available qubits, thereby enabling large-scale quantum computing tasks. Take Fig.~\ref{fig:circuit_cut} as an example: assuming that the task requires 10 qubits, while each QPU can only accommodate 5 qubits. Therefore, this circuit must be distributed across at least two QPUs. Figure~\ref{fig:dqc_archi} shows a DQC architecture example consisting of two QPUs. Each QPU contains two types of qubits: computing qubits (white circle) for executing local gates and communication qubits (blue circles) for establishing channels between QPUs, also referred to as quantum links. When two QPUs are physically connected (physical coupling), their communication qubits can establish quantum links, enabling the execution of remote CNOT gates. In addition, although the numbers of computational and communication qubits per QPU in DQC are not identical, they are assumed to be approximately equal. The DQC framework is primarily divided into three components: qubit allocation, entanglement management, and network scheduling.

Qubit allocation involves splitting a quantum circuit and assigning subcircuits to different QPUs in a rational manner. The primary objective is to minimize the number of remote CNOT gates, which occur across QPUs and incur higher communication costs compared to local gates. As an example, shown in Fig.~\ref{fig:circuit_cut}, the first CNOT gate operates on two qubits within the same QPU and is referred to as a local CNOT gate; while the second CNOT gate spans two QPUs and is referred to as a remote CNOT gate. Entanglement management is a key technique for implementing remote CNOT gates. Currently, the most widely used method is Cat-Comm, as illustrated in Fig.~\ref{fig:cat_comm}. QPU $u_1$ and QPU $u_2$ first generate an EPR pair by establishing a quantum link between their respective communication qubits. This step is the most time-consuming in the entire process, as shown in Table~\ref{tab:epr_latency}. The EPR pair is then combined with local gates, measurements, and classical communication to implement a remote CNOT gate between qubits $q_0$ and $q_1$. Another commonly used method is TP-Comm~\cite{wu2022autocomm}. However, this approach requires measuring the qubit $q_0$, which collapses its quantum state. It alters the original circuit structure and will introduce additional retransmission latency. Therefore, we only adopt Cat-Comm to implement remote gates in this paper. Network scheduling refers to the execution order of quantum gates based on their dependencies. Specifically, any two gates that share at least one qubit must follow a sequential execution order. For example, in Fig.~\ref{fig:circuit_cut}, the third CNOT gate can only be applied after the second CNOT has been completed. In contrast, gates can be executed in parallel when quantum resources are available. DAG is commonly used to represent the dependency and execution constraints in quantum circuits. For more details, refer to~\cite{zhou2025cloudqc}.

\section{UNIQ DQC Model\label{sec:DQC_model}}
This section outlines the fundamental assumptions, operations, and objectives of our proposed UNIQ-DQC model.

\textbf{Time Slot Modeling and Gate Simplification.} We divide the total execution time of the quantum circuit into multiple time slots of length $t$, where $t$ corresponds to the EPR pair establishment time ($t_{ep}$ in Table~\ref{tab:epr_latency}). Since the execution time of a remote gate is significantly longer than that of a local gate, multiple local gates can be executed within a single time slot. Therefore, the execution time of local gates is negligible and can be ignored in our model. We retain only local CNOT gates and remote CNOT gates in the subsequent circuit representation, as illustrated in Fig.~\ref{fig:ex_circuit}.

\textbf{Unified Optimization of DQC Stages.} Most of the previous work dealt with the three stages of DQC separately. Each stage is typically optimized with its own objective function. Such design flow lacks global coordination and often results in suboptimal solutions. Indeed, we observe that the three stages of DQC are inherently interdependent. Circuit allocation affects the number and location of remote gates, while these remote gates determine the need and timing of entanglement generation. The above two steps directly constrain the feasibility and efficiency of network scheduling. Therefore, we propose a unified modeling approach to achieve globally optimized and more efficient solutions.

\textbf{EPR Pre-establishment for Remote Gate Efficiency.} Although prior research has recognized that remote CNOT gates are time-consuming and attempted to reduce this (e.g., using the same EPR pair for consecutive remote gates~\cite{ferrari2023modular}), these strategies face some limitations. They often require that the involved gates reside on the same pair of QPUs and disallow any intermediate operations. To overcome these constraints, we propose an alternative approach: pre-establishing EPR pairs for future remote gates. When there are still idle communication qubits available in a time slot, additional EPR pairs can be created in advance. These entangled pairs can then be directly used by subsequent remote operations. By establishing multiple EPR pairs in parallel within a single time slot, the overall circuit runtime can be significantly reduced.

\textbf{Design Objectives of UNIQ-DQC.} We employ a unified objective function to jointly optimize the following objectives: 
\begin{itemize}
    \item Minimize the communication cost of remote gates: The QPUs involved in remote gates may not be adjacent, so the system must identify the nearest path to execute the remote gate efficiently.

    \item Minimize total task runtime under acceptable algorithm execution time: The circuit runtime should be reduced as much as possible, while ensuring that the algorithm's execution time remains within an acceptable cost.
\end{itemize}

\section{Problem Formulation\label{sec:problem}}
\begin{table}[h]
\centering
\small
\caption{Summary of symbols and definitions.}
\begin{tabularx}{0.48\textwidth}{lX}
\toprule
\textbf{Parameters} & \textbf{Description} \\
\midrule
$Q$ & logical qubits. \\
$\mathcal{G}$ & CNOT gate. $g=(i_g, j_g)$ denote the two logical qubits of CNOT gate $g$.\\
$\mathcal{P}$ & precedence relations. $(g' \rightarrow g)$ indicates that the gate $g'$ must be executed before the gate $g$.\\
$\mathcal{U}$ & QPU nodes. \\
$\mathcal{E} = \{E_u, u \in \mathcal{U}\}$ & number of communication qubits on QPU $u$. \\
$C_{uv}$ & communication cost on link $(u, v)$. We set $C_{uu}=0$. \\
$Cap_{u}$ & The maximum number of qubits in QPU $u$ (QPU capacity). \\
$H = |\mathcal{G}|$ & time-slot upper bound. \\
$\alpha$ & weight coefficients in the objective function.\\
\midrule
\textbf{Decision Variables} & \textbf{Description} \\
\midrule
$\pi_{q,u} \in \{0,1\}$ & assign qubit $q$ to QPU $u$. \\
$z_{g,t} \in \{0,1\}$ & gate $g$ is executed in slot $t$. \\
$y_{g,t} \in \{0,1\}$ & EPR pair for gate $g$ is built in slot $t$. \\
\midrule
\textbf{Auxiliary Variables} & \textbf{Description} \\
\midrule
$s_{u,v,t} \in \mathbb{Z}_{\ge 0}$ & 
\makecell[tl]{EPR stored on QPUs $(u,v)$ after time \\slot $t$.} \\
$\delta_{g} \in \{0,1\}$ & 
\makecell[tl]{$\delta_g = 1$ if logical qubits $i_g$, $j_g$ are on \\different QPUs; 0 otherwise.} \\
\bottomrule
\end{tabularx}
\label{tab:parameters}
\vspace{-4mm}
\end{table}

In this section, we formulate the DQC problem based on the system model described in Section~\ref{sec:DQC_model}. The goal is to execute a large-scale quantum task over a DQC network while satisfying the design objectives outlined earlier. To achieve this, we jointly consider the three interdependent stages of DQC (qubit allocation, entanglement management, and network scheduling) to coordinate quantum gate operations across the entire circuit. The parameters, decision variables, and auxiliary variables used in our formulation are summarized in Table~\ref{tab:parameters}. First, we introduce our objective function:
\begin{equation}
\min \alpha\sum_{t=1}^{H} \sum_{g \in \mathcal{G}} t \cdot z_{g,t}
+ \beta \sum_{t=1}^{H} \sum_{\substack{u,v \in \mathcal{U} \\ u \ne v}} \sum_{g=(i_g, j_g)} C_{u,v} \cdot \pi_{i_g, u} \cdot \pi_{j_g, v} \cdot z_{g,t}.
\label{eq:objective}
\end{equation}
The objective function consists of two parts: to minimize the total task runtime and to reduce the communication cost of remote CNOT gates. The first component of Eq.~(\ref{eq:objective}) aims to schedule all gates as early as possible within the allowed time slots, effectively minimizing the overall circuit runtime. The second component focuses on reducing the communication cost incurred by remote CNOT gates. $C_{uv}$ is defined as the shortest path length between QPU $u$ and  QPU $v$ as~\cite{zhou2025cloudqc}. If $\pi_{i_g, u} \cdot \pi_{j_g, v}=1$, it indicates that the two qubits involved in the remote gate $g$ are located on QPUs $u$ and $v$; otherwise, they do not. $\alpha$ and $\beta$ are two weighting parameters to balance these two components.

Next, we will introduce nine constraints of our model. To facilitate understanding, each constraint is illustrated with a corresponding circuit example, as shown in Figs.~\ref{fig:ex_circuit} and~\ref{fig:ex_circuit2}.

\paragraph{Mapping Validity}
Each logical qubit $q$ must be assigned to exactly one QPU $u$. As shown in Fig.~\ref{fig:ex_circuit}, $q_0$ and $q_1$ are mapped to QPU $u1$; $q_2$ and $q_3$ are mapped to QPU $u2$.
\begin{equation}
\sum_{u \in \mathcal{U}} \pi_{q,u} = 1, \quad \forall q \in Q.
\end{equation}

\paragraph{QPU Capacity}
The number of allocated qubits $q$ on a QPU $u$ cannot exceed its capacity. As shown in Fig.~\ref{fig:ex_circuit}, at most two qubits can be assigned to QPU $u1$ or $u2$.
\begin{equation}
\sum_{q \in Q} \pi_{q,u} \leq Cap_{u}, \quad \forall u \in \mathcal{U}.
\end{equation}

\paragraph{Gate Scheduling}
Each gate $g$ must be and can only be assigned an exact time slot $t$. As shown in Fig.~\ref{fig:ex_circuit}, gate $g_1$ is executed only at time slot $t_1$; gate $g_2$ is executed only at time slot $t_2$.
\begin{equation}
\sum_{t=1}^{H} z_{g,t} = 1, \quad \forall g \in \mathcal{G}.
\end{equation}

\paragraph{Same-QPU Indicator}
A binary variable $\delta_g$ indicates whether the two qubits of CNOT gate $g = (i_g, j_g)$ reside on different QPUs. If $\delta_g=0$, it means that the two qubits of gate $g$ are on the same QPU (local CNOT gate); if $\delta_g=1$, it means that the two qubits of gate $g$ are on different QPUs (remote CNOT gate). As shown in Fig.~\ref{fig:ex_circuit}, $g_3$ is local CNOT and $\delta_g=0$; $g_4$ is remote CNOT and $\delta_g=1$.
\begin{equation}
\delta_g = 1 - \sum_{u \in \mathcal{U}} \pi_{i_g,u} \pi_{j_g,u}, \quad \forall g \in \mathcal{G}.
\label{eq:placement}
\end{equation}

\paragraph{EPR Generation Requirement}
If gate $g$ is a remote CNOT gate ($\delta_g=1$), then one EPR pair for that gate is required. As shown in Fig.~\ref{fig:ex_circuit}, $g_3$ does not require EPR and $\sum_{t=1}^{H} y_{g_{3},t}=0$; $g_4$ requires EPR and $\sum_{t=1}^{H} y_{g_{4},t}=1$.
\begin{equation}
\sum_{t=1}^{H} y_{g,t} = \delta_g, \quad \forall g \in \mathcal{G}.
\end{equation}
\begin{figure}[t]
\centering
\includegraphics[width=0.45\textwidth]{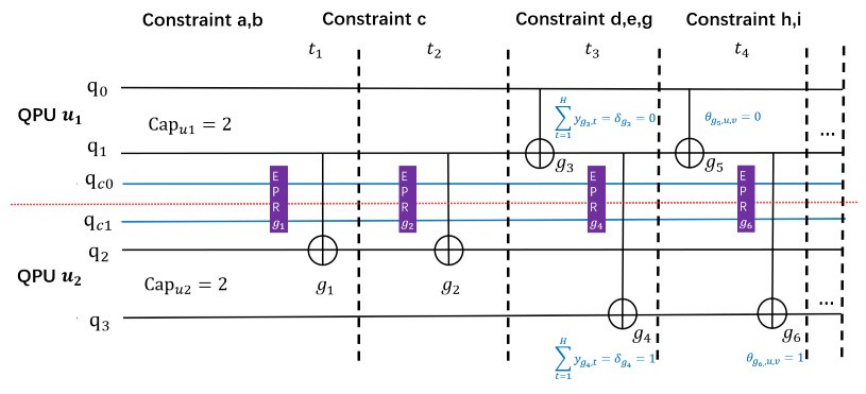}
\caption{Example circuit illustrating constraints a-e, g-i.}\label{fig:ex_circuit}
\vspace{-4mm}
\end{figure}

\begin{figure}[t]
\centering
\includegraphics[width=0.35\textwidth]{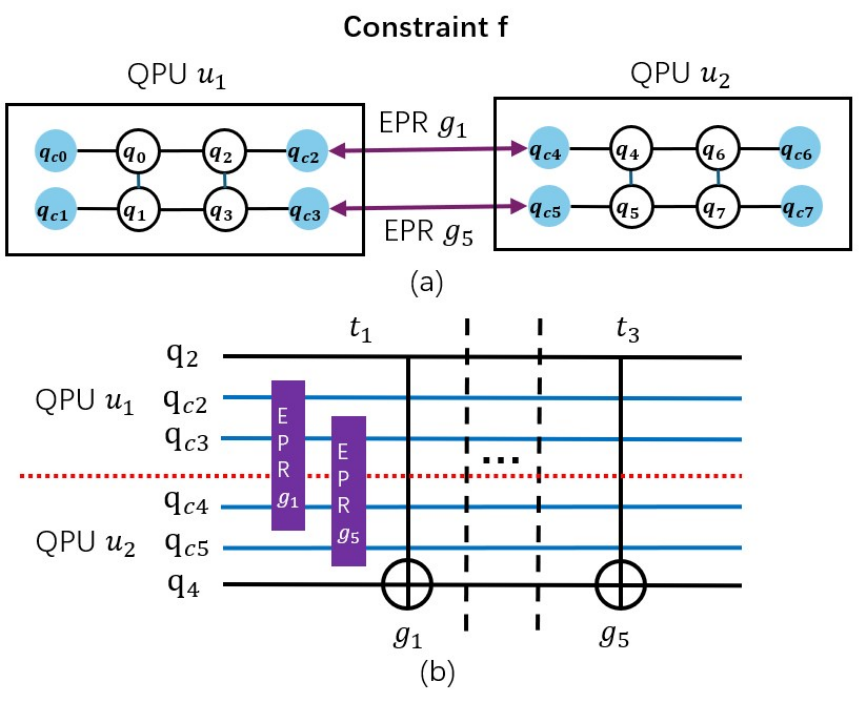}
\caption{(a) Two EPR pairs for $g_1$ and $g_5$ are generated in parallel and completed together in the same time slot. (b) Example circuit illustrating constraint f.}\label{fig:ex_circuit2}
\vspace{-4mm}
\end{figure}
\paragraph{EPR Before Execution Ordering}
The EPR pair for the remote gate $g$ must be generated before the gate is executed. As shown in Fig.~\ref{fig:ex_circuit2}, $g_5$ is executed at $t_3$, and thus its required EPR pair must be established no later than $t_3$. Suppose that communication qubits $q_{c3}$ and $q_{c5}$ are idle at $t_1$; in this case, the EPR pair for $g_5$ can be pre-established at $t_1$. Moreover, since two EPR pairs for $g_1$ and $g_5$ are generated in parallel within the same time slot, only one EPR setup slot is consumed instead of two, thereby reducing the circuit runtime.
\begin{equation}
\sum_{\tau=1}^{t} y_{g,\tau} \geq z_{g,t} - (1 - \delta_g), \quad \forall g \in \mathcal{G},\ \forall t=1,\dots,H.
\end{equation}

\paragraph{Precedence}
If two CNOT gates have qubit overlap, then they have execution order. As shown in Fig.~\ref{fig:ex_circuit}, $g_4$ must be scheduled no later than $g_6$.
\begin{equation}
\sum_{\tau=1}^{t} z_{g',\tau} \leq \sum_{\tau=1}^{t} z_{g,\tau}, \quad \forall (g', g) \in \mathcal{P},\ \forall t=1,\dots,H.
\end{equation}

\paragraph{Undirected Mapping Indicator}
A dummy variable $\theta_{g,u,v}$ indicates whether gate $g$ spans QPUs $u$ and $v$. As shown in Fig.~\ref{fig:ex_circuit}, $g_5$ is local CNOT, and $\theta_{g_{5},u,v}=0$; $g_6$ is remote CNOT, and $\theta_{g_{6},u,v}=1$.
\begin{equation}
\theta_{g,u,v} = \pi_{i_g,u} \pi_{j_g,v} + \pi_{i_g,v} \pi_{j_g,u}, \quad \forall g \in \mathcal{G},\ \forall u, v \in \mathcal{U},\ u \neq v.
\end{equation}

\paragraph{Concurrent EPR Generation and Inventory}
EPR pairs can be stored and used, subject to inventory limits. As shown in Fig.~\ref{fig:ex_circuit}, the number of EPR pairs at $t_4$ is $s_{u,v,t_{4}}=0$. Since there are no remaining EPR pairs from $t_3$, one EPR pair is generated at $t_4$ and one is consumed at the same time.
\begin{equation}
s_{u,v,0} = 0, \quad \forall u,v \in \mathcal{U},
\end{equation}
\begin{equation}
\begin{aligned}
s_{u,v,t} &= s_{u,v,t-1} 
+ \sum_{g \in \mathcal{G}} y_{g,t} \theta_{g,u,v} \\
&\quad - \sum_{g \in \mathcal{G}} z_{g,t} \theta_{g,u,v}, 
\quad \forall u \neq v,\ t = 1,\dots,H,
\end{aligned}
\end{equation}
\begin{equation}
0 \leq \sum_{v \in \mathcal{U}} s_{u,v,t} \leq E_u, \quad \forall u \in \mathcal{U},\ \forall t=1,\dots,H.
\end{equation}

\section{Framework Design\label{sec:framework}}


\subsection{Design Overview}
\label{subsec:design-overview}

Now we introduce our UNIQ-DQC framework. The inputs to the UNIQ include the logical qubit set \(Q\), the CNOT multiset
\(\mathcal G\) with partial order \(\mathcal P\), a cloud
configuration \((\mathcal U,\mathrm{Cap},E,C)\), and a slot
horizon \(H\leq|\mathcal G|\). We adopt a Greedy–JIT (Just-In-Time~\cite{zhao2023systematic,zhan2025just,corominas2019multistage,bryan2023graph}) constructor to generate a full execution plan that maps the quantum circuit onto the distributed quantum processor. This plan ensures that all hardware constraints are satisfied, including qubit capacity limits, precedence, and communication requirements between QPUs. It is constructed according to the rules defined by the hybrid QAP–RCPSP (Quadratic Assignment Problem-Resource Constrained Project Scheduling Problem) model.

The UNIQ workflow is structured as a two-stage pipeline consisting of qubit allocation and network scheduling. In the first stage, logical qubits are greedily mapped to physical QPUs based on communication weights. This mapping is determined once at the beginning and stored in the \(\pi\), which remains fixed throughout the process. In the second stage, each gate is scheduled in the earliest available time slot that satisfies communication constraints, based on the DAG precedence and the fixed placement \(\pi\). If a gate spans two QPUs, the required EPR pair is generated at the same time slot or earlier, thereby avoiding long-term reservation of communication qubits.

By separating qubit allocation and network scheduling, the entire process becomes predictable and consistent, producing the same output for a given input without randomness. This separation also ensures that it runs in polynomial time with respect to the circuit size. Since the scheduling layer does not alter the allocation decisions, both stages can be refined or optimized independently. The Greedy–JIT constructor thus provides a reproducible and efficient warm start solution for more sophisticated optimization algorithms, while ensuring feasibility under all constraints.

\subsection{Greedy Qubit--QPU Mapping\label{subsec:mapping}}

We first fix a qubit--to--QPU assignment, which remains unchanged during the subsequent scheduling stage. This placement phase is feasibility-driven: each logical qubit must be mapped to exactly one QPU and no QPU exceed its qubit capacity (Constraints a-b). Among all feasible placements, we prioritize those that are likely to reduce remote communication costs during execution. To guide this process, we construct an \emph{interaction graph}
\(G_{\mathrm{int}}=(Q,w)\), whose vertices are the logical qubits and whose edge weights quantify two-qubit gate interactions. Specifically, the weight 
  $w_{ij}
  \;=\;
  \bigl|\{\,g\in\mathcal G \mid \{i_g,j_g\}=\{i,j\}\,\}\bigr|$, $w_{ii}=0$,
counts the number of CNOT gates between qubits \(i\) and \(j\).
Heuristically, assigning vertices with large \(w_{ij}\) to the same QPU reduces the number of remote CNOTs, thereby lowering communication cost and EPR pairs consumption.

The procedure performs a single deterministic sweep over all qubits.  For each qubit $q$, we first compute its total interaction weight
\(W(q)=\sum_{j\in Q} w_{qj}\) and process the qubits in non‑increasing order of \(W\) (ties are broken by the qubit index so that the algorithm is reproducible).  At each step, only QPUs with remaining capacity are considered. Let
\(S(q)=\{u\in\mathcal U \mid r_u>0\}\) denote the set of QPUs with remaining capacity, where \(r_u\) counts free seats on device \(u\).
For each QPU \(u\in S(q)\), we evaluate the new communication cost incurred by assigning qubit \(q\) to \(u\):
  $\Delta(q,u)
  \;=\;
  \sum_{\substack{j\in Q\\ \pi_{j,u}=1}} w_{qj}\, C_{u,\pi(j)}$,
where \(\pi(j)\) denotes the QPU currently assigned to qubit \(j\).  The sum only includes qubits already placed on QPU \(u\); interactions with qubits on other devices are unaffected by the choice of \(u\) and are thus excluded.  The qubit is then assigned to the QPU \(u^\star\) that minimizes \(\Delta(q,u)\) (once again the ties are resolved by the smallest QPU index). The assignment \(\pi_{q,u^\star}=1\) is recorded and the residual capacity \(r_{u^\star}\) decreased.
Since a qubit is placed only when \(r_{u^\star}>0\), the uniqueness and capacity invariants are preserved throughout, and no backtracking is required.

Once all qubits have been mapped, the placement matrix \(\pi\) is used to derive the gate indicators required by the model.  For each gate \(g=(i_g,j_g)\),
  $\delta_g \;=\; 1 - \sum_{u\in\mathcal U} \pi_{i_g,u}\pi_{j_g,u}$
marks whether the gate is local (\(\delta_g=0\)) or remote (\(\delta_g=1\)). For each ordered pair \(u\neq v\),
  $\theta_{g,u,v}
  \;=\;
  \pi_{i_g,u}\pi_{j_g,v} + \pi_{i_g,v}\pi_{j_g,u}$
identifies the (unordered) QPU endpoints of a remote gate.  These arrays, along with \(\pi\), are passed unchanged to the scheduling phase.

Two edge cases are handled explicitly.  First, if a qubit never participates in a CNOT (\(W(q)=0\)), all candidate devices give \(\Delta(q,u)=0\), and the deterministic tie‑breaker chooses the first device with remaining capacity.  Second, if at any point, \(S(q)=\varnothing\) (that is, the total capacity \(\sum_u \mathrm{Cap}_u\) is insufficient), the procedure ends and reports the infeasibility of the instance under Constraints a-b. This entire procedure is summarized in Algorithm~\ref{alg:mapping}.

\begin{algorithm}[t]
\caption{Greedy Qubit--QPU Mapping}
\label{alg:mapping}
\begin{algorithmic}[1]
\Require Interaction weights $w_{ij}$, QPU capacities $\mathrm{Cap}_u$, communication costs $C_{uv}$
\Ensure Mapping $\pi_{q,u}$ with $\sum_{u}\pi_{q,u}=1$ and $\sum_{q}\pi_{q,u}\le \mathrm{Cap}_u$
\State $\pi_{q,u}\gets 0$ for all $q,u$; \quad $r_u\gets \mathrm{Cap}_u$ for all $u$
\State $W(q)\gets \sum_{j\in Q} w_{qj}$ for all $q$
\State $\mathcal{O}\gets$ qubits sorted by non--increasing $W(q)$, breaking ties by index
\For{$q \in \mathcal{O}$}
    \State $S(q)\gets \{\,u\in\mathcal U \mid r_u>0\,\}$
    \If{$S(q)=\varnothing$}
        \State \textbf{return} \textsc{Infeasible} \Comment{$\sum_u \mathrm{Cap}_u < |Q|$}
    \EndIf
    \State $\displaystyle u^\star \gets \arg\min_{u\in S(q)} \ \sum_{\substack{j\in Q\\ \pi_{j,u}=1}} w_{qj}\, C_{u,\pi(j)}$
           \Comment{break ties by the smallest $u$}
    \State $\pi_{q,u^\star}\gets 1;\quad r_{u^\star}\gets r_{u^\star}-1$
\EndFor
\end{algorithmic}
\end{algorithm}

\subsection{JIT Scheduling with EPR Generation}
\label{subsec:scheduling}

Once the mapping matrix \(\pi\) (and therefore the indicators \(\delta,\theta\) in Constraints d-g) is fixed, each CNOT gate \(g\in\mathcal G\) has to be assigned in a time slot \(\tau(g)\in\{1,\ldots,H\}\) while obeying precedence, single–slot assignment, and the per–slot communication–qubit (EPR) budgets (Constraints c, e-g, i). The scheduler operates on the precedence DAG \((\mathcal G,\mathcal P)\) under a fixed topological order. For each gate \(g\), we compute the earliest slot that can possibly host it,
  $t_{\min}(g) = 1 + \max\{\,\tau(g') \mid (g',g)\in\mathcal P\,\}$,
with the maximum over the empty set is \(0\) by convention. This choice directly encodes the precedence relation into a lower bound on the start time, so any slot \(t\ge t_{\min}(g)\) automatically respects Constraint f.

Starting from \(t_{\min}(g)\), the scheduler linearly scans the timeline and selects the first slot which satisfies all resource constraints. This earliest–feasible rule promotes a compact schedule by greedily minimizing the surrogate objective \(\sum_t t\,z_{g,t}\) and avoids delaying gate execution unnecessarily toward the end of the horizon. Once a feasible slot is found, the gate is assigned to it, which ensures Constraint c.

When $\delta_g=0$ (both qubits of $g$ reside on the same QPU), the first tested slot is accepted because no inter–QPU communication resource is consumed. In this case, the EPR–related constraints (Constraints e–f, i) are vacuous, whereas precedence (Constraint g) and single–slot assignment (Constraint c) are already satisfied by the earlier construction. When $\delta_g=1$ (remote CNOT), the scheduler additionally verifies that both endpoint QPUs have sufficient communication capacity at the candidate slot $t$:
  $\sum_{v\neq u} s_{u,v,t} \;<\; E_u, \text{for } u=\pi(i_g),\ \pi(j_g)$,
where $s_{u,v,t}$ is the EPR inventory maintained by the recursion (Constraint i). This test counts the EPR pairs already reserved on each endpoint at slot $t$; the gate is allowed to reserve an additional pair. Therefore, the inequality ensures the per‑slot budget in Constraint i.

Once a feasible execution slot $t^\star$ is found, we set $z_{g,t^\star}=1$ and record $\tau(g)=t^\star$. For a remote gate, we then choose a generation time
  $t_{\mathrm{gen}}(g)\in\{1,\ldots,\tau(g)\}$,
and set $y_{g,t_{\mathrm{gen}}(g)}=1$. The default choice is \emph{in–slot} generation $t_{\mathrm{gen}}(g)=\tau(g)$, which minimizes the EPR lifetime and thus inventory pressure. Alternatively, we may take the latest feasible earlier slot in $[1,\tau(g)-1]$ to smooth peak demand. In both cases, the same per‑slot budget check is enforced at $t_{\mathrm{gen}}(g)$ for both endpoint QPUs. This satisfies the requirement of generating exactly one EPR pair if and only if the gate is remote (Constraint d) and the ordering requirement that the pair be available no later than execution (Constraint e).

After committing $(z,y)$, the inventory $s$ is updated through the recursion (Constraint i): one unit is added on the undirected link $(u_1,u_2)$ at $t_{\mathrm{gen}}(g)$ and one unit is removed at $\tau(g)$. For in‑slot generation, these two updates occur in the same slot and cancel immediately. Maintaining $s$ in this way guarantees consistency for future capacity checks and ensures that the equalities and bounds in Constraint h hold inductively.

Since the scan for each gate only moves forward in time and the horizon is chosen as \(H\ge m\), we are guaranteed to find a feasible slot unless the instance itself is infeasible under the given capacities. Throughout, the algorithm maintains: (i) the fixed completion times \(\tau(\cdot)\) for computing new \(t_{\min}(\cdot)\); (ii) the binary matrices \(z\) and \(y\); and (iii) \(s\) (or an equivalent per–QPU per–slot aggregation from which \(s\) can be reconstructed). Since no decision is ever revisited, the routine terminates after a single pass over \(\mathcal G\). The complete procedure is summarized in Algorithm~\ref{alg:scheduler}.

\begin{algorithm}[t]
\caption{JIT schedule with EPR generation}
\label{alg:scheduler}
\begin{algorithmic}[1]
\Require fixed mapping $\pi$, indicators $\delta$; precedence DAG $(\mathcal G,\mathcal P)$; horizon $H$
\State $z_{g,t}\gets 0,\; y_{g,t}\gets 0 \quad \forall g,t$
\State $s_{u,v,t}\gets 0 \quad \forall u\neq v,\; t=1,\dots,H$
\For{$g$ in topological order of $(\mathcal G,\mathcal P)$}
    \State $t \gets 1 + \max_{g'\prec g}\tau(g')$ \Comment{0 if no predecessor}
    \While{$t \le H$}
        \If{$\delta_g = 0$}
            \State $z_{g,t}\gets 1;\;\tau(g)\gets t$ \Comment{local gate, no cross-QPU resource}
            \State \textbf{break}
        \Else
            \State $u_1 \gets \pi(i_g),\;\; u_2 \gets \pi(j_g)$
            \State $\text{used}_{u_1} \gets \sum_{v\neq u_1} s_{u_1,v,t}$,\quad
                   $\text{used}_{u_2} \gets \sum_{v\neq u_2} s_{u_2,v,t}$
            \If{$\text{used}_{u_1} + 1 \le E_{u_1}$ \textbf{and} $\text{used}_{u_2} + 1 \le E_{u_2}$}
                \State $z_{g,t}\gets 1;\;\tau(g)\gets t$
                \State $y_{g,t}\gets 1$ \Comment{in-slot EPR generation (can be moved earlier if desired)}
                \State $s_{u_1,u_2,t}\mathrel{+}{=}\,1;\; s_{u_2,u_1,t}\mathrel{+}{=}\,1$
                \State \textbf{break}
            \Else
                \State $t \gets t+1$
            \EndIf
        \EndIf
    \EndWhile
\EndFor
\end{algorithmic}
\end{algorithm}


\subsection{Theoretical Analysis}
\label{subsec:analysis}


The following two theorems establish two key properties of the procedure: (i) it is guaranteed to terminate after a polynomial number of basic computational steps; (ii) it invariably returns a schedule that satisfies the full set of constraints defined in Section~\ref{sec:problem}. Overall, these results confirm that the Greedy–JIT constructor provides a reliable and predictable fast start for any subsequent optimization stage.

\begin{theorem}[Convergence]\label{thm:termination}
For every finite instance
\(
(Q,\mathcal G,\mathcal P,\mathcal U,
  \mathrm{Cap},E,C,H)
\), the Greedy–JIT constructor terminates after at most
  $O\!\bigl(
      n\log n + np + m+e + mH + p^{2}H
     \bigr)
  \;=\;
  O\!\bigl(m^{2}\bigr)$
primitive operations when
\(n,e,H=\Theta(m)\) and \(p=o(m)\).
\end{theorem}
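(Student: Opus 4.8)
The plan is to split the claim into its two components — that the Greedy--JIT constructor halts at all, and that it does so within the stated operation budget — and handle them in that order. Termination is essentially immediate: Algorithm~\ref{alg:mapping} makes a single pass over the finitely many logical qubits and never revisits an assignment, while Algorithm~\ref{alg:scheduler} makes a single pass over the gates in a fixed topological order, and its inner loop only ever increments $t$ subject to the explicit cap $t\le H$. So every loop is bounded a priori and the substance of the theorem is the amortized count of primitive operations (arithmetic, comparisons, array/hash lookups, each charged $O(1)$). Throughout I would write $n=|Q|$, $m=|\mathcal G|$, $p=|\mathcal U|$, $e=|\mathcal P|$, use $\deg(q)$ for the degree of $q$ in the interaction graph $G_{\mathrm{int}}$ and $\deg^-(g)$ for the in-degree of $g$ in the precedence DAG, and record the two amortization identities $\sum_q\deg(q)\le 2m$ and $\sum_g\deg^-(g)=e$.

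\textbf{Cost of Algorithm~\ref{alg:mapping}.} I would charge: $O(np)$ to zero-initialize $\pi$ and the residual capacities $r_u$; $O(m+n)$ for one sweep over $\mathcal G$ to build the weights $w_{ij}$ and the totals $W(q)$; $O(n\log n)$ to sort the qubits by $W$; and then $O(np+m)$ for the greedy sweep itself. The last bound is the one needing care: in the iteration for $q$ one forms $S(q)$ in $O(p)$ and evaluates $\arg\min_{u\in S(q)}\Delta(q,u)$ over the $O(p)$ feasible QPUs using the already-accumulated interaction weights of $q$, which is $O(\deg(q)+p)$ per qubit and hence $O(m+np)$ over all $n$ iterations; recording $\pi_{q,u^\star}=1$, decrementing $r_{u^\star}$, and later reading off $\delta_g,\theta_{g,\cdot,\cdot}$ from $\pi$ are all $O(1)$ per item, i.e. $O(m)$ in total. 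Summing, Algorithm~\ref{alg:mapping} costs $O(n\log n+np+m)$.

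\textbf{Cost of Algorithm~\ref{alg:scheduler}.} Here the two initialization lines of the pseudocode already contribute $O(mH)$ (for $z$ and $y$) and $O(p^2H)$ (for the inventory $s$, equivalently the per-QPU/per-slot used-count aggregates that the scheduler actually reads). Computing a topological order of $(\mathcal G,\mathcal P)$ costs $O(m+e)$. In the main loop, over all $m$ gates, evaluating $t_{\min}(g)=1+\max_{g'\prec g}\tau(g')$ amortizes to $O(e)$; the inner loop inspects at most $H$ candidate slots per gate, and — crucially — each slot's feasibility test is $O(1)$ because $\sum_{v\neq u}s_{u,v,t}$ is read from the maintained aggregates rather than recomputed, so the inner loops cost $O(mH)$ in total; committing $z_{g,t^\star}$, $y_{g,t_{\mathrm{gen}}(g)}$ and the two $s$-updates is $O(1)$. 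Hence Algorithm~\ref{alg:scheduler} costs $O(mH+p^2H+m+e)$.

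\textbf{Combining, and the main obstacle.} Adding the two bounds gives exactly $O(n\log n+np+m+e+mH+p^2H)$, the stated expression. To finish I would substitute $n,e,H=\Theta(m)$ and $p=o(m)$: the term $mH=\Theta(m^2)$ dominates, while $n\log n=\Theta(m\log m)$, $np=o(m^2)$ and $m+e=\Theta(m)$ are all $o(m^2)$; the $p^2H$ term is likewise lower-order once $p=O(\sqrt m)$ (the DQC regime of interest), or it can be reduced to $O(pH)$ by storing only the per-QPU aggregation, so the sum is $O(m^2)$. The difficulty is not the control flow but precisely these amortized estimates: one must justify that the per-qubit $\arg\min$ reuses accumulated neighbour weights rather than rescanning all placed qubits, and — more importantly — that each of the up-to-$H$ slot checks in the JIT scheduler is $O(1)$; without the latter the inner loops would cost $O(mpH)$ and the $O(m^2)$ conclusion would break. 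A secondary loose end is reconciling the $p^2H$ term with the target bound under the somewhat generous hypothesis $p=o(m)$, which is where I would either invoke the mild $p=O(\sqrt m)$ assumption or switch to the $O(pH)$-size inventory representation.
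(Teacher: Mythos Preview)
Your approach matches the paper's: decompose the constructor into placement sorting ($n\log n$), per-qubit QPU scan ($np$), topological sort ($m+e$), per-gate slot scanning ($mH$), and inventory maintenance ($p^{2}H$), then sum and substitute the scaling assumptions. You are in fact more careful than the paper's own proof, which simply asserts that the leading term is $m^{2}$ under $p=o(m)$ without addressing the $p^{2}H$ subtlety you correctly flag (and without making explicit, as you do, that each slot test must be $O(1)$ for the $mH$ bound to hold).
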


\begin{proof}
The placement routine of Section \ref{subsec:mapping} begins by sorting all $n$ logical qubits, which takes $O(n\log n)$ time, followed by scanning at most \(p\) QPUs for each qubit, contributing an additional \(np\) operations.
Next, a topological sort of the precedence graph (representing gate dependencies) is performed, which visits
all \(m\) vertices and all \(e\) arcs and therefore costs \(m+e\) operations.
In the scheduling phase (Section \ref{subsec:scheduling}), the algorithm inspects at most \(H\) candidate slots for each of the \(m\) gates. After all gates are fixed, it updates a \(p\times p\times (H+1)\) inventory array, which takes \(p^{2}H\) steps.
Summing these contributions yields the bound stated in the theorem. Under the scaling \(n,e,H=\Theta(m)\) and \(p=o(m)\), the leading term is \(m^{2}\). 
\end{proof}


\begin{theorem}[Feasibility]\label{thm:feasibility}
Let  \((\pi,z,y,\delta,\theta)\) be the schedule returned by the constructor. Then all nine constraints a-i from section \ref{sec:problem} are satisfied.
\end{theorem}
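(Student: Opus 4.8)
The plan is to verify the nine constraints one at a time, tracking which algorithmic step enforces each. The overall structure is an induction over the main loop of Algorithm~\ref{alg:scheduler}, maintaining the invariant that after processing the first $k$ gates the partial assignment $(z,y,s)$ is consistent with Constraints c, e--i restricted to already-scheduled gates, while Constraints a, b, d are properties of the fixed mapping $\pi$ produced by Algorithm~\ref{alg:mapping}.

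First I would dispatch the mapping-side constraints. Constraint a (each qubit on exactly one QPU) and Constraint b (capacity) follow from the loop invariant of Algorithm~\ref{alg:mapping}: a qubit is assigned only when $r_{u^\star}>0$, the residual is then decremented, and each qubit is touched exactly once in the sweep over $\mathcal{O}$; if the instance is infeasible the routine exits before returning a schedule, so we may assume it succeeds. Constraints d (the $\delta_g$ definition) and the undirected indicator (the $\theta_{g,u,v}$ definition, Constraint~h in the earlier list) hold by construction, since Section~\ref{subsec:mapping} \emph{defines} $\delta_g$ and $\theta_{g,u,v}$ by exactly the formulas in the model once $\pi$ is fixed. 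These are essentially definitional and take only a sentence each.

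Next I would handle the scheduling-side constraints using the loop structure of Algorithm~\ref{alg:scheduler}. Constraint c (single-slot assignment) holds because the \texttt{while} loop sets $z_{g,t}\gets 1$ exactly once and then \texttt{break}s, and the horizon $H\ge m$ guarantees a slot is found (this is where I invoke Theorem~\ref{thm:termination}'s termination argument plus the observation that a local gate always fits and a remote gate fits whenever $E_u\ge 1$, so the only failure mode is genuine infeasibility). Constraint g (precedence) holds because $t$ is initialized to $1+\max_{g'\prec g}\tau(g')$ and the topological order ensures all predecessors already have their $\tau$ fixed; I would show $\sum_{\tau'\le t}z_{g',\tau'}\le\sum_{\tau'\le t}z_{g,\tau'}$ for all $t$ reduces to $\tau(g')<\tau(g)\le\tau(g')+1$-type bookkeeping. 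Constraint~f (the "EPR before execution" ordering, which in the statement appears as the $\sum_{\tau=1}^t y_{g,\tau}\ge z_{g,t}-(1-\delta_g)$ inequality) follows because for a remote gate we set $y_{g,t_{\mathrm{gen}}(g)}=1$ with $t_{\mathrm{gen}}(g)\le\tau(g)$, and for a local gate the right-hand side is $\le 0$; Constraint~d's count ($\sum_t y_{g,t}=\delta_g$) follows because $y$ is set once for remote gates and never for local gates. Constraints e (EPR generation requirement, $\sum_t y_{g,t}=\delta_g$ in the statement's lettering) and the concurrent-inventory Constraint i reduce to checking the recursion: the algorithm literally implements $s_{u,v,t}\mathrel{+}{=}1$ at generation and the subtraction at execution is automatic for in-slot generation, and the \texttt{if} test $\text{used}_u+1\le E_u$ is exactly the per-slot upper bound $\sum_{v}s_{u,v,t}\le E_u$; the lower bound $s_{u,v,t}\ge 0$ holds because every removal of a unit is preceded by its addition in the same or an earlier slot, which I would argue from the ordering $t_{\mathrm{gen}}(g)\le\tau(g)$ together with the inductive hypothesis.

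The main obstacle I anticipate is the bookkeeping for Constraint i in the non-default case where EPR generation is deliberately moved to an earlier slot $t_{\mathrm{gen}}(g)<\tau(g)$ to smooth peak demand: here the addition and subtraction no longer cancel within a slot, so I must show the per-slot budget $\sum_{v}s_{u,v,t}\le E_u$ holds at \emph{every} intermediate slot $t\in[t_{\mathrm{gen}}(g),\tau(g)]$, not just at the endpoints the algorithm explicitly checks. This requires either restricting the "move earlier" option so that the budget check is re-run at each intervening slot, or proving that the earliest-feasible discipline already leaves enough slack; I would state the clean version (in-slot generation, where cancellation is immediate and the argument is a one-line telescoping) as the principal claim and treat the peak-smoothing variant as a remark with the added per-slot checks made explicit. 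A minor secondary point is confirming the index conventions match between the lettered constraints a--i in Section~\ref{subsec:scheduling} and the displayed equations in Section~\ref{sec:problem}, since the text's "Constraint f/g/h" labels and the equation order do not line up perfectly; I would fix a single consistent numbering at the start of the proof and cross-reference it.
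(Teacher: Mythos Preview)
Your proposal is correct and follows essentially the same approach as the paper: a constraint-by-constraint walkthrough showing which step of Algorithm~\ref{alg:mapping} or Algorithm~\ref{alg:scheduler} enforces each of a--i. The paper's proof is terser (no explicit inductive invariant) but structurally identical; your added care about the intermediate-slot budget when $t_{\mathrm{gen}}(g)<\tau(g)$ and about the lettering mismatch is valid scrutiny that the paper simply elides by treating in-slot generation as the principal case.
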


\begin{proof}
\textit{a–b (mapping).}  
A qubit is mapped only once and only to a QPU with residual capacity, hence \(\sum_{u}\pi_{q,u}=1\) and
\(\sum_{q}\pi_{q,u}\le\mathrm{Cap}_{u}\).

\textit{c (gate uniqueness).}  
In Algorithm~\ref{alg:scheduler}, a gate \(g\) is committed to the first
feasible slot and the loop breaks immediately (local case:
Lines~7–9; remote case: Lines~13–17). Thus exactly one \(z_{g,t}\) is set to~1 and \(\sum_{t=1}^H z_{g,t}=1\).

\textit{d (same‑QPU indicator).}  
After placement \(\delta_g\) is computed as in Eq.~(\ref{eq:placement}), and both cases
\(\delta_g\in\{0,1\}\) are obtained directly from \(\pi\).

\textit{e-f (EPR requirements).}  
If \(\delta_g=0\), no EPR is generated (\(\sum_t y_{g,t}=0\));
if \(\delta_g=1\), the scheduler sets \(y_{g,\tau(g)}=1\) and no other entry, so \(\sum_ty_{g,t}=1\) and
\(\sum_{\tau\le t}y_{g,\tau}\ge z_{g,t}-\bigl(1-\delta_g\bigr)\).

\textit{g (precedence).}  
Because each candidate slot \(t\) is tested in non‑decreasing order starting from
\(1+\max_{g'\prec g}\tau(g')\), we have \(\tau(g)>\tau(g')\) for all \(g'\prec g\), and therefore
\(\sum_{\tau\le t}z_{g',\tau}\le\sum_{\tau\le t}z_{g,\tau}\).

\textit{h (unordered QPU pair).}  
After mapping, \(\theta_{g,u,v}\) is set exactly as required by Constraint h.

\textit{i (EPR inventory).}  
The inventory is initialized at Algorithm~\ref{alg:scheduler} Line~3 with \(s_{u,v,0}=0\) and updated at
Line~16 is consistent with the recursion in Constraint i. Before accepting a remote gate, the algorithm checks (Algorithm~\ref{alg:scheduler} Lines~12–13) that
\(\sum_{v\neq u}s_{u,v,t}+1 \le E_u\) for both endpoints, which enforces the per–slot budget. Hence \(0\le\sum_v s_{u,v,t}\le E_u\) for all \(u,t\), and the update equations in Constraint i hold.

Therefore, each constraint condition has been satisfied.
\end{proof}


\section{Evaluation\label{sec:evaluation}}

\begin{figure}[t]
\centering
\includegraphics[width=0.4\textwidth]{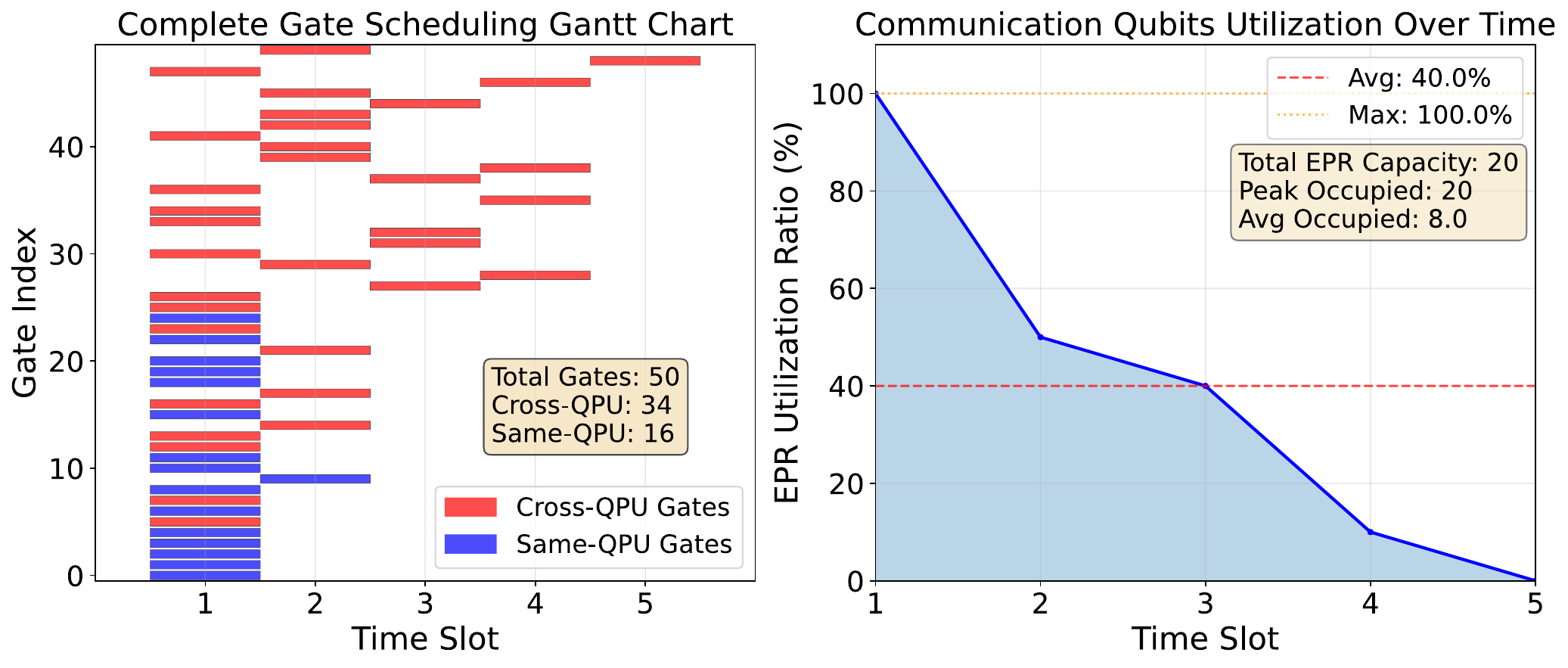}
\caption{Performance of representative example.}\label{fig:ep_example}
\vspace{-4mm}
\end{figure}

\begin{table}[ht]
\centering
\caption{Characteristics of user-defined circuits.}
\begin{tabular}{|c|c|c|c|c|c|}
\hline
Circuits & Gates & Qubits & QPU & Comput\_Qubit & Commu\_Qubit \\ \hline
tiny     & 10    & 10     & 2   & 5              & 10           \\ \hline
small    & 20    & 15     & 3   & 5              & 10           \\ \hline
medium   & 50    & 32     & 4   & 8              & 10           \\ \hline
large    & 100   & 60     & 5   & 12             & 10           \\ \hline
\end{tabular}
\label{tab:user_circuit}
\vspace{-4mm}
\end{table}

\begin{figure}[t]
\centering
\includegraphics[width=0.4\textwidth]{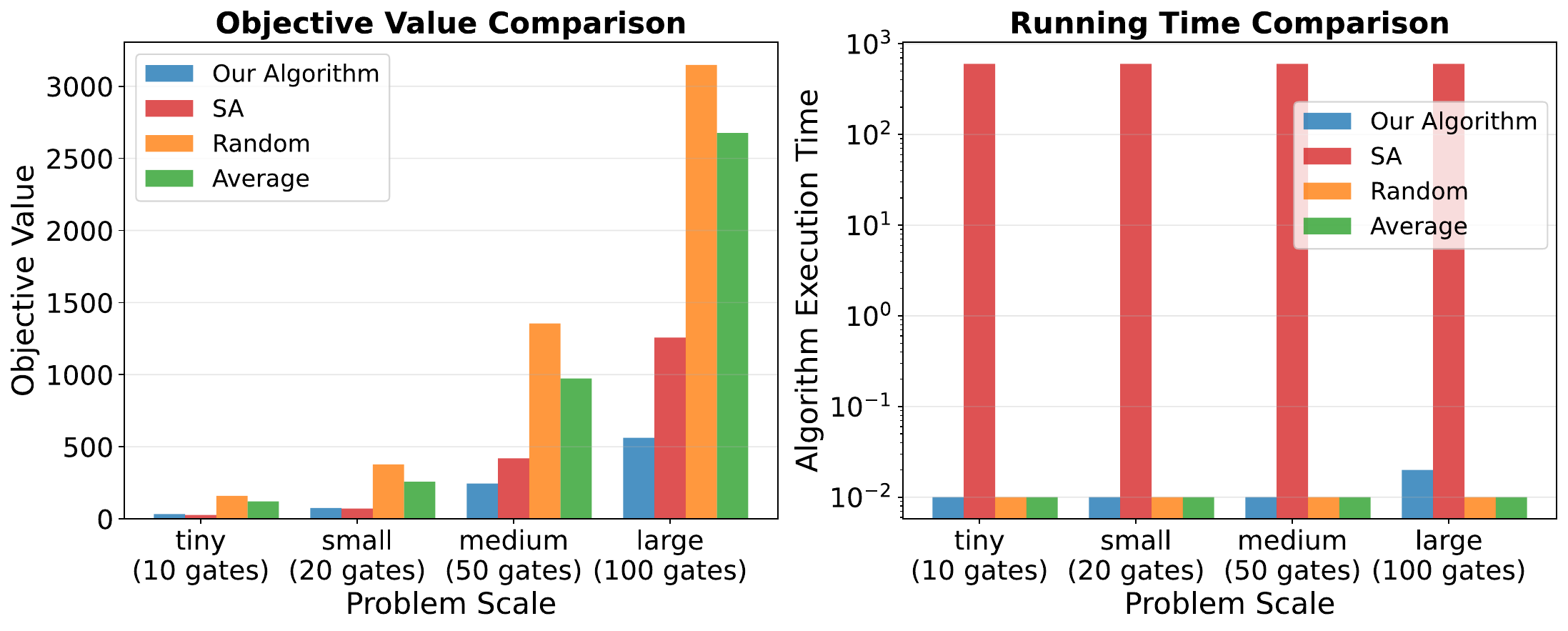}
\caption{Performance on different user-defined circuits.}\label{fig:ep_user_cir}
\vspace{-4mm}
\end{figure}

\begin{table}[ht]
\centering
\caption{Characteristics of real-world quantum circuits.}
\begin{tabular}{|c|c|c|c|}
\hline
Circuits    & Qubits & 2 Qubit Gates & Circuit Depth \\ \hline
bv\_n70     & 70     & 36            & 40            \\ \hline
cat\_130    & 20     & 15            & 3             \\ \hline
ghz\_n127   & 50     & 32            & 4             \\ \hline
qugan\_n111 & 100    & 60            & 5             \\ \hline
\end{tabular}
\label{tab:real_circuit}
\vspace{-4mm}
\end{table}

\begin{figure*}[t]
\begin{subfigure}[b]{0.4\textwidth}
    \centering
    \includegraphics[width=\textwidth]{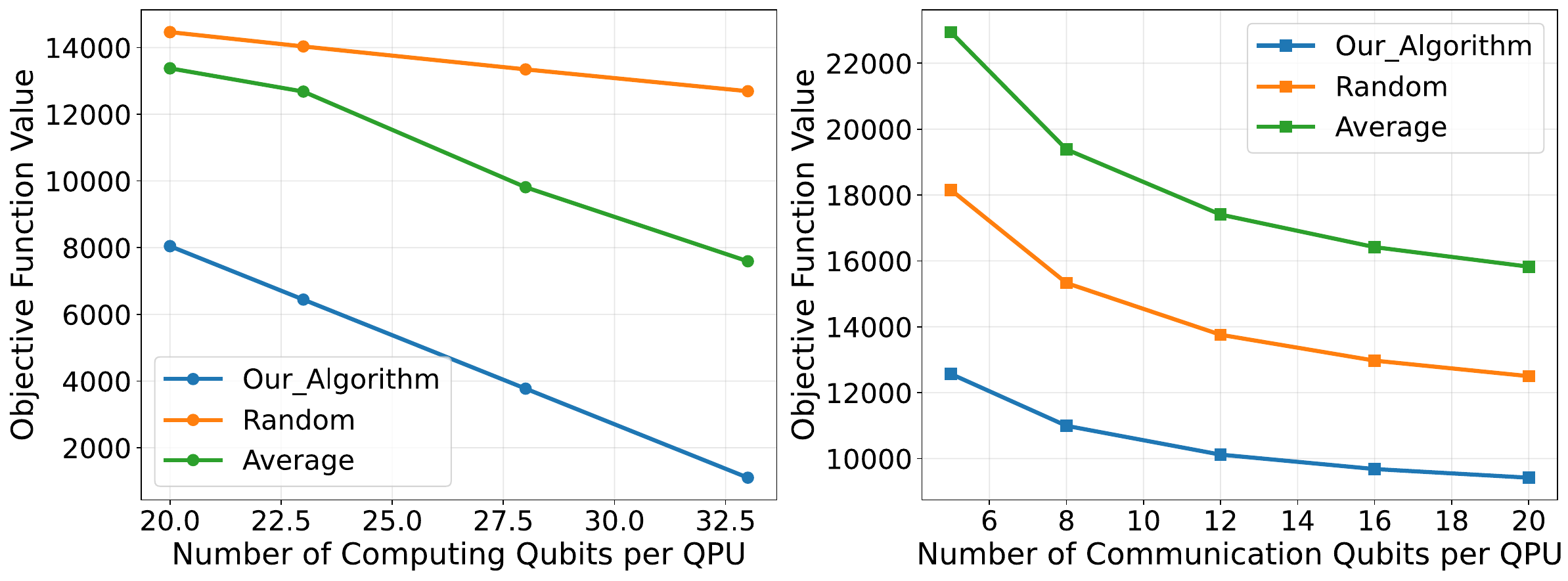}
    \caption{bv\_n70 circuit}
    \label{fig:bv}
\end{subfigure}
\hfill
\begin{subfigure}[b]{0.4\textwidth}
    \centering
    \includegraphics[width=\textwidth]{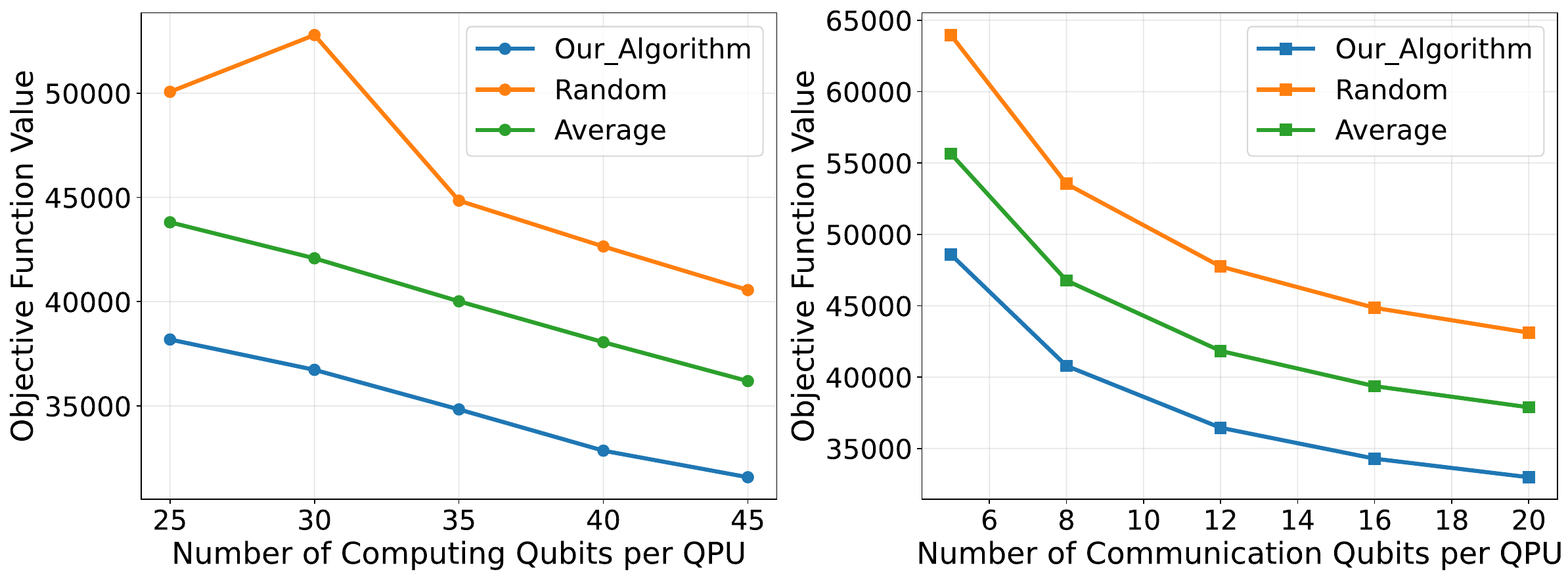}
    \caption{cat\_n130 circuit}
    \label{fig:cat}
\end{subfigure}

\vspace{2mm}

\begin{subfigure}[b]{0.4\textwidth}
    \centering
    \includegraphics[width=\textwidth]{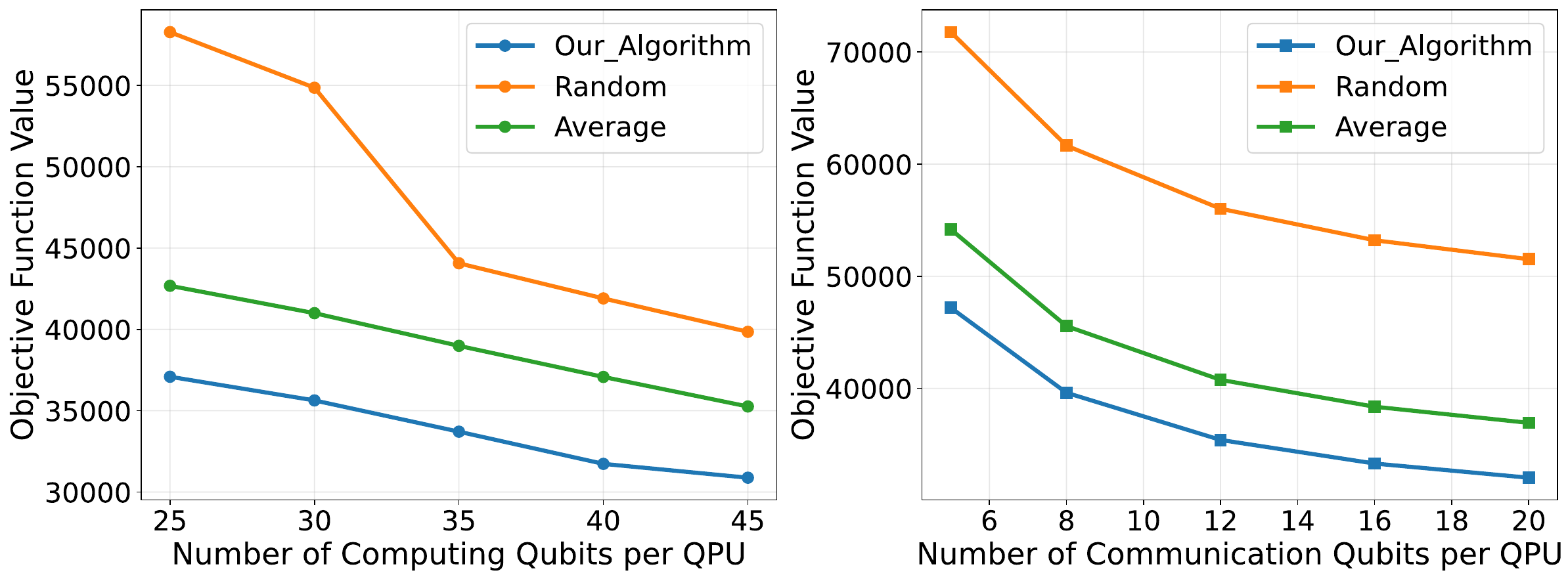}
    \caption{ghz\_n127 circuit}
    \label{fig:ghz}
\end{subfigure}
\hfill
\begin{subfigure}[b]{0.4\textwidth}
    \centering
    \includegraphics[width=\textwidth]{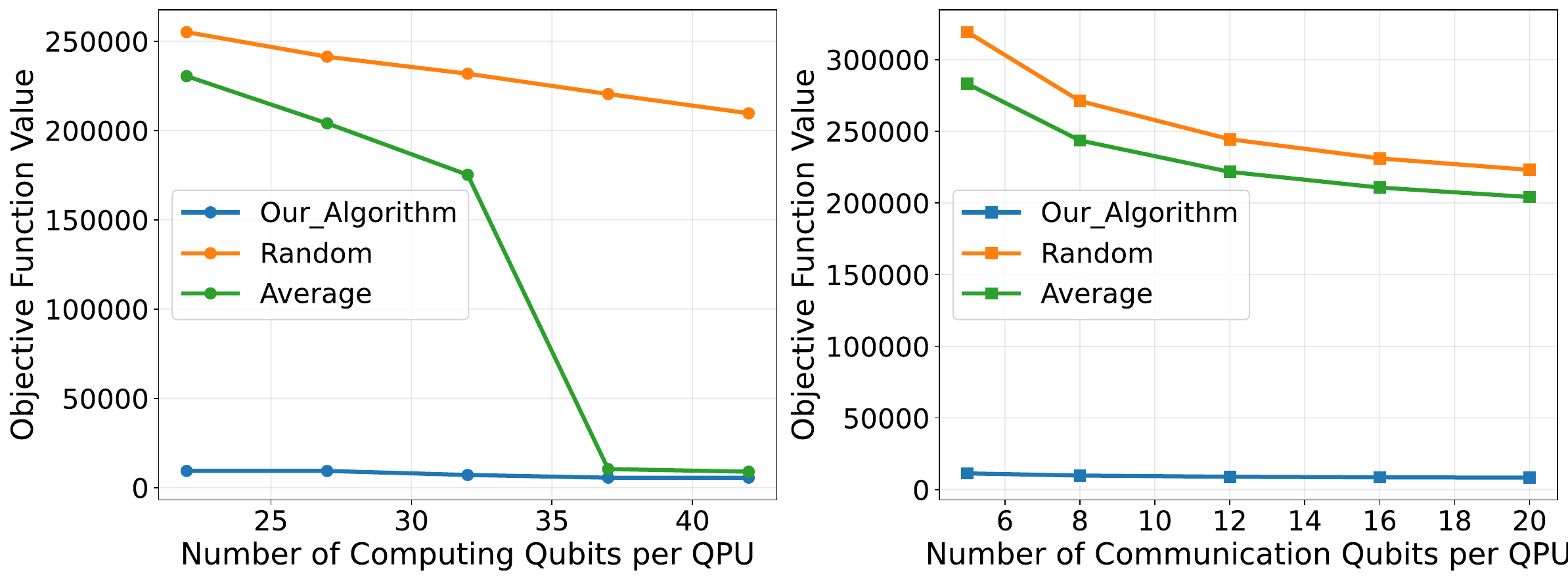}
    \caption{qugan\_n111 circuit}
    \label{fig:qugan}
\end{subfigure}

\caption{Impact of computing or communication qubit counts across four real-world quantum circuits.}
\label{fig:ep_real_cir}
\vspace{-4mm}
\end{figure*}

\begin{figure}[t]
\centering
\includegraphics[width=0.4\textwidth]{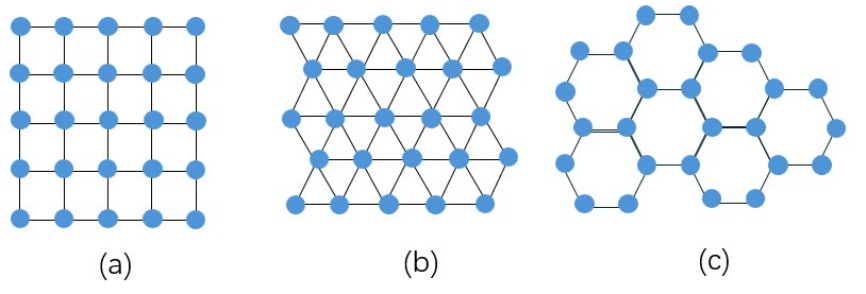}
\caption{QPU topologies. (a) Square topology, E/N=1.60. (b) Triangle topology, E/N=2.24. (c) Hexagonal topology, E/N=3. }\label{fig:qpu_topology}
\vspace{-4mm}
\end{figure}

\begin{figure}[t]
\centering
\includegraphics[width=0.4\textwidth]{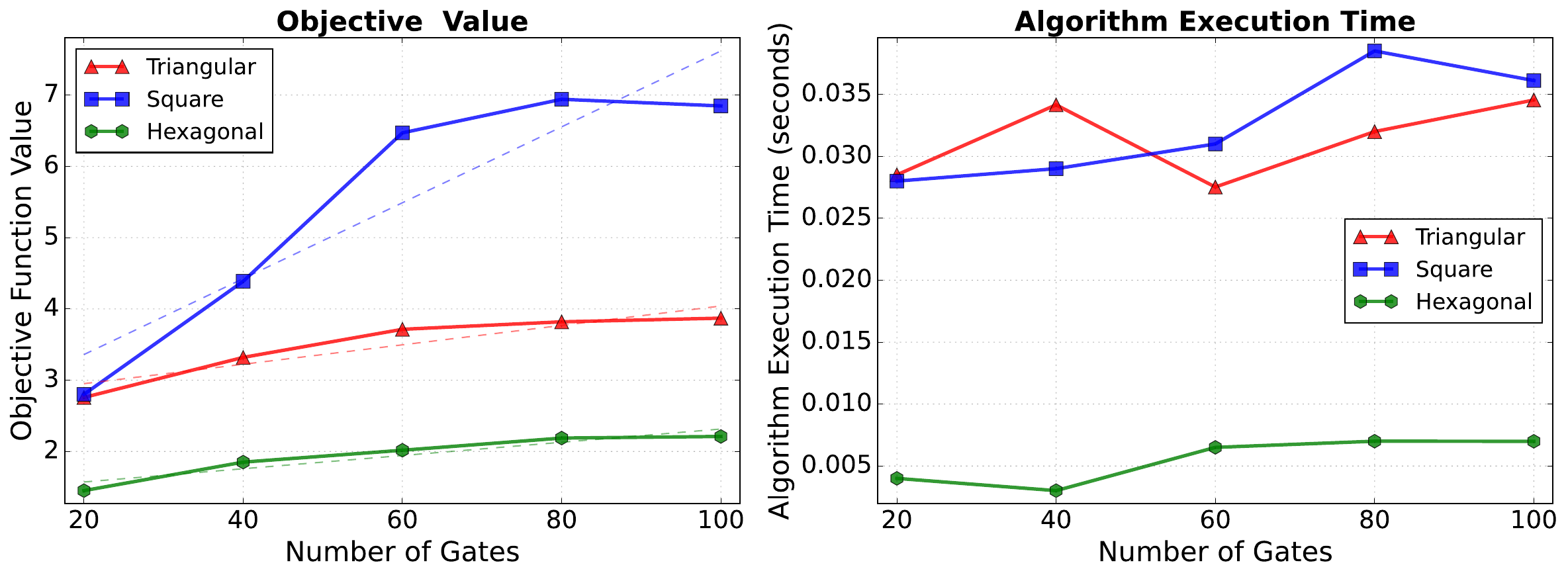}
\caption{Comparison of QPU topologies.}\label{fig:topo_compare}
\vspace{-6mm}
\end{figure}

\begin{figure}[t]
\centering
\includegraphics[width=0.4\textwidth]{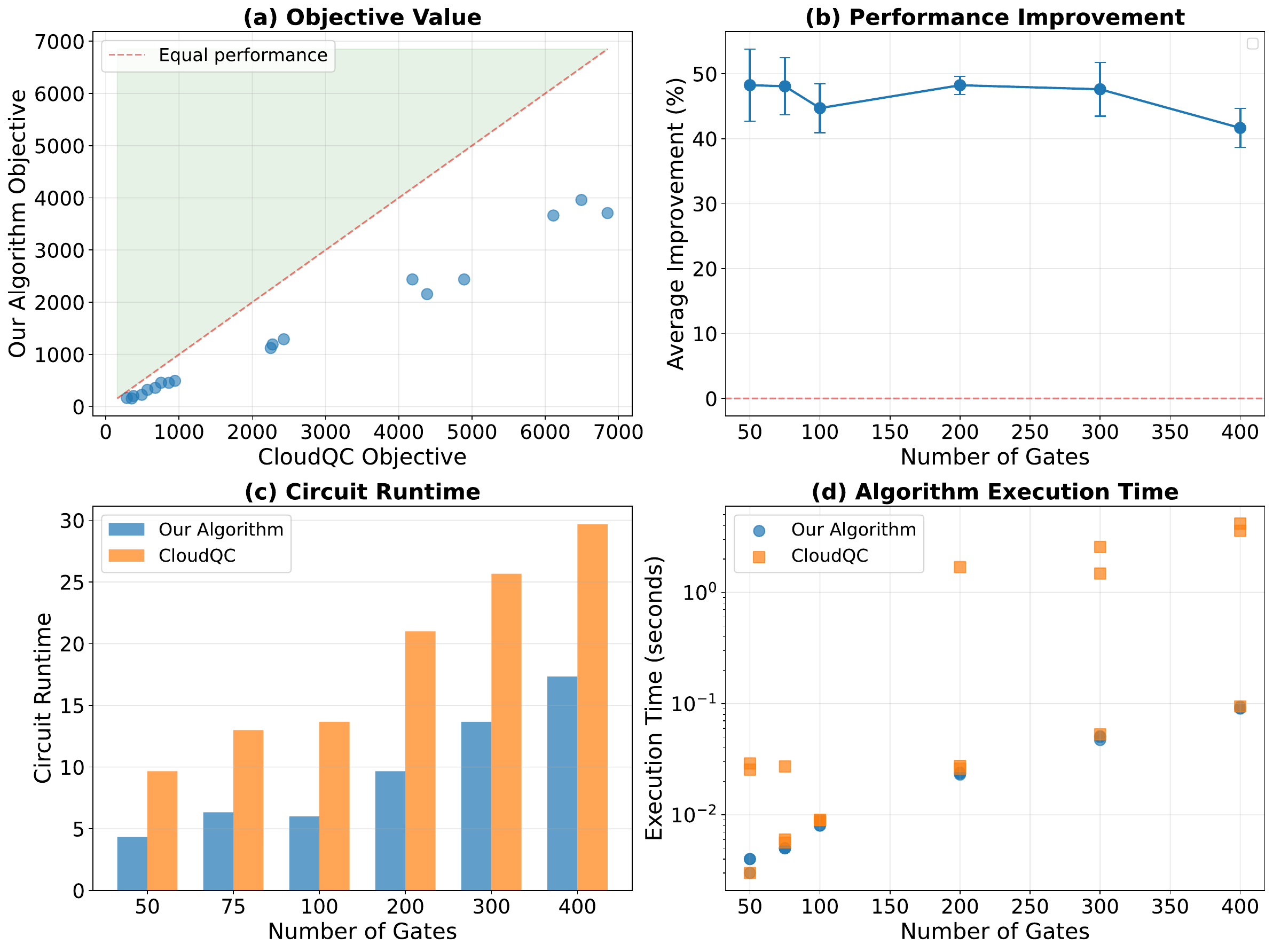}
\caption{Comparison with CloudQC.}\label{fig:cloudqc}
\vspace{-6mm}
\end{figure}

\subsection{Evaluation Setting}
\textbf{Implementation.}
Since no public simulator supports distributed quantum clouds with per-slot EPR budgets, we implemented a lightweight discrete–event simulator in \texttt{Python}.  Quantum circuits are parsed via \texttt{Qiskit}~\cite{Qiskit}, and graph routines rely on \texttt{NetworkX}~\cite{hagberg2008exploring}.  The greedy constructor and the simulated–annealing improver are written in pure NumPy for reproducibility and speed.  All experiments were run on a single CPU core unless stated otherwise.

\textbf{Topology Settings.} Unless otherwise specified, our DQC framework is configured with 5 QPUs by default, each equipped with 20 computing qubits and 10 communication qubits. The inter-QPU topology is randomly generated, meaning that the paths between QPUs are assigned at random.

\textbf{Evaluation Metrics.} We evaluate system performance using four metrics: circuit runtime, algorithm execution time, the objective value defined in Eq.~(\ref{eq:objective}), and EPR pairs utilization. UNIQ is designed to minimize the first three metrics while maximizing EPR pairs utilization.

\subsection{Representative Example}
We demonstrate UNIQ effectiveness through a representative example involving a quantum circuit with 50 qubits and 50 CNOT gates, each assigned to one time slot. The gates are sequentially indexed from 1 to 50 and randomly placed across the qubits. The left diagram shown in Fig.~\ref{fig:ep_example} visualizes the time slot scheduling of the gates, where red boxes denote remote CNOT gates and blue boxes denote local CNOT gates. For clarity, we provide an example explanation: gates indexed 1–5 are local gates and are all scheduled in $t_1$; gates indexed 11, 17, 21 are remote gates and are all scheduled in $t_2$. This dense packing of gates in early slots highlights UNIQ’s ability to prioritize independent operations, thereby minimizing total circuit runtime. The right diagram presents the EPR pairs utilization across time slots. A clear concentration of EPR consumption in earlier time slots reflects UNIQ’s efficiency in managing limited communication resources.

\subsection{Algorithm Comparison}
We compare UNIQ with existing algorithms across various quantum circuits and QPU topologies. The following algorithms are used as baselines. a) Simulated Annealing (SA): A heuristic method that probabilistically escapes local optima to find near-optimal solutions. b) Random: Randomly map qubits to QPUs, sample any precedence‑respecting gate order, and defer conflicts to the earliest feasible slot. c) Average: Inter-QPU communication capacity is uniformly distributed, giving each cross gate an equal share.

\textbf{Different user-defined circuits.} Table~\ref{tab:user_circuit} presents the characteristics of user-defined circuits with various scales. The results are shown in Fig.~\ref{fig:ep_user_cir}, where the left plot compares the objective values, and the right plot presents the corresponding algorithm execution times. All algorithms are configured to return a feasible solution within a 600-second time limit. The results demonstrate that UNIQ consistently performs well across all task sizes, especially on large circuits, while maintaining extremely short execution times (0.01 seconds). In contrast, the execution time of the SA algorithm is significantly longer (exclude it from subsequent simulations), and the objective values generated by the Random and Average algorithm are much higher.

\textbf{Different computing or communication qubits.} We selected four real-world quantum circuits as representative benchmarks, whose characteristics are summarized in Table~\ref{tab:real_circuit}. We vary the number of computing and communication qubits independently to evaluate their impact on performance. As shown in Fig.~\ref{fig:ep_real_cir}, increasing either type of qubit generally leads to lower objective values. Throughout all settings, our algorithm consistently achieves the lowest objective value, outperforming all baselines.

\textbf{Different QPU topologies.} The three topology structures are shown in Fig~\ref{fig:qpu_topology}. Each topology contains 25 nodes (25 QPUs), and the edge to node ratio E/N (E is the number of edges and N is the number of nodes) reflects the degree of topology connectivity. As shown in Fig.~\ref{fig:topo_compare}, a higher E/N ratio leads to better system performance, as higher connectivity improves the flexibility of gate scheduling and reduces communication cost.

\subsection{Framework Comparison}
We compare UNIQ with existing DQC frameworks. To ensure a fair comparison, the baseline and our framework should under similar problem settings. Specifically, they should cover all three stages of DQC and adopt the Cat-Comm protocol for remote gate execution. Inspired by CloudQC~\cite{zhou2025cloudqc}, UNIQ is also suitable for multi-tenant scenarios. Specifically, our method can be viewed as globally sorting all CNOT gates across multi-tenants. Based on this similarity, we select CloudQC as the baseline framework for comparison.

We conducted three simulations for each circuit scale. As shown in Fig.~\ref{fig:cloudqc}(a), x- and y-coordinates of each blue point represent the average objective values of CloudQC and UNIQ across different circuit scales. For instance, the last blue point indicates that CloudQC's objective is close to 7000, while UNIQ's is around 4000. Figure~\ref{fig:cloudqc}(b) further demonstrates that UNIQ significantly reduces the objective value—by nearly 50\% compared to CloudQC. Figure~\ref{fig:cloudqc}(c) presents the average circuit runtime of CloudQC and UNIQ, while each point in Fig.~\ref{fig:cloudqc}(d) shows the comparison of algorithm execution time under each simulations. Across all evaluations, UNIQ consistently outperforms CloudQC in both circuit runtime and algorithm execution time.


\section{Conclusion\label{sec:conclusion}}
In this paper, we propose UNIQ, a novel optimization framework for DQC network. UNIQ unifies the three fundamental stages of the DQC workflow (qubit allocation, entanglement management, and gate scheduling) into a single NIP model, thereby obtaining a more optimal feasible solution. Furthermore, UNIQ proactively exploits idle communication qubits to pre-establish the time-consuming EPR pairs, enabling the parallel generation of multiple EPR pairs. This strategy significantly reduces the execution time of remote CNOT gates. We conducted comprehensive simulations across diverse circuits and QPU topologies. Compared to existing algorithms and DQC frameworks, UNIQ minimizes total circuit runtime while reducing the communication cost of remote gates.

\bibliographystyle{IEEEtran}
\bibliography{ref}

\end{document}